\definecolor{lred}{RGB}{255, 96, 96}
\definecolor{lgreen}{RGB}{64, 255, 64}
\definecolor{lblue}{RGB}{160, 160, 255}
\definecolor{gb}{RGB}{0,128,64}
\definecolor{rb}{RGB}{128,0,64}
\newcommand{\mrm}[1]{\mathrm{#1}}
\newcommand{\mbb}[1]{\mathbb{#1}}
\newcommand{\tsc}[1]{\textsc{#1}}
\newcommand{\lrceil}[1]{\lceil #1 \rceil}
\newcommand{\lrfloor}[1]{\lfloor #1 \rfloor}
\newcommand{\pre}{\mrm{Pref}}
\newcommand{\Str}{\mathit{Str}}
\newcommand{\Con}[2][]{{\normalfont\textsc{Con{#2}}\textsuperscript{#1}}}
\newtheorem{problem}{Problem}
\newtheorem{exmpl}{Example}
\title[Learning deterministic FSMs from the prefixes of a single string is NP-complete]{Learning deterministic finite-state machines from the prefixes of a single string is NP-complete}
\author{
  \Name{Radu Cosmin Dumitru} \Email{r.c.dumitru@student.tudelft.nl}\\
  \addr{Delft University of Technology, Delft, Netherlands}\\
  \Name{Ryo Yoshinaka} \Email{ryoshinaka@tohoku.ac.jp}\\
  \Name{Ayumi Shinohara} \Email{ayumis@tohoku.ac.jp}\\
  \addr{Tohoku University, Sendai, Japan}
}
\begin{document}

\maketitle

\begin{abstract}
It is well known that computing a minimum deterministic finite automaton consistent with a given set of positive and negative examples is NP-hard.
Previous work has identified conditions on the input sample under which the problem becomes tractable or remains hard.
In this paper, we study the computational complexity of the case where the input sample is prefix-closed.
This formulation is equivalent to computing a minimum Moore machine consistent with observations along its runs.
We show that the problem is NP-hard to approximate when the sample set consists of all prefixes of binary strings.
Furthermore, we show that the problem remains NP-hard as a decision problem even when the sample set consists of the prefixes of a single binary string.
Our argument also extends to the corresponding problem for Mealy machines.
\end{abstract}
\begin{keywords}
automata learning, learning complexity, problem approximability, prefix-closed sample
\end{keywords}

\section{Introduction}
The problem of learning a deterministic finite automaton (DFA) from a labeled sample set containing positive and negative examples has been a central problem in grammatical inference.
In particular, the \tsc{MinConDFA} problem, which asks for a smallest DFA consistent with the given sample, has been extensively studied.
It was shown to be NP-hard by \cite{Gold2} and \cite{Angluin78}.\footnote{Precisely speaking, the model considered in these works corresponds to a variant of Mealy machines that report the label on the last edge traversed by each input string.  \cite{Lingg} gave a DFA-adaptation of \citeauthor{Gold2}'s proof.}

Furthermore, \cite{PittW1993} showed that \tsc{MinConDFA} is not only NP-hard, but also hard to approximate within any polynomial ratio.
Assuming $\mrm{P} \neq \mrm{NP}$, they proved that for any constant $c$, no polynomial-time algorithm can guarantee a consistent DFA of size at most $m_*^{c}$, where $m_*$ denotes the size of a smallest consistent DFA.
More recently, \cite{ChalermsookLN2014} proved that \tsc{MinConDFA} is NP-hard to approximate within a factor of $n^{1-\epsilon}$ for any constant $\epsilon > 0$, where $n$ is the sample size; this bound is tight, since a trivial $O(n)$-approximation is achieved by the prefix-tree acceptor.

On the other hand, the literature has proposed several polynomial-time algorithms \citep[etc.]{RPNI,EDSM} that compute a smallest DFA when the input sample satisfies a property that is \emph{characteristic} of the target DFA \citep{LgIdent,delaHiguera97}.
This contrasts with \cite{TrakhtenbrotB73}'s algorithm which returns a smallest DFA provided that the input sample is \emph{uniformly complete}, which is a ``structural property'', independently of the target DFA to be learned.
Along this line, \cite{Vazquez2016} showed that \citeauthor{TrakhtenbrotB73}'s algorithm outputs a minimal DFA when the sample is \emph{reasonably complete}, which is a weaker property than the uniform completeness.
Furthermore, \cite{Zhang} introduced an even weaker condition, called \emph{semantic completeness}, and showed that if the input sample is prefix-closed and semantically complete, then the \tsc{MinConDFA} problem can be solved in polynomial time.
These positive results emphasize the role of prefix-closure as a key structural property.
However, \cite{Zhang} also proved that \tsc{MinConDFA} remains NP-hard for prefix-closed samples that are not semantically complete.
In a different but related direction, \cite{ueno2013hardness} studied the case where the input sample consists of some prefixes of a single string and showed that the problem remains NP-complete in this case.

In this paper, we combine the restrictions studied by \cite{Zhang} and \cite{ueno2013hardness}  and consider samples that consist of all prefixes of a single string over a binary alphabet.
We show that even under this unified restriction, the problem remains NP-complete.
For Moore machines, this setting corresponds to observing the input/output labels along a single run, where both the input and the output alphabets are binary.
This models scenarios in which one has trace information from a single execution -- e.g., in black-box testing, runtime monitoring, or trace-based reverse engineering -- but has no ability to reset or systematically explore the state space.
The same NP-completeness result also holds for Mealy machines, where outputs are placed on transitions instead of states. 
Our results show that the \tsc{MinConDFA} problem stays computationally intractable despite the highly constrained structure of the input data.

We also show that the problem remains hard to approximate within any polynomial ratio when the input consists of a prefix-closed set of multiple runs over a binary alphabet, for finding a smallest DFA, acyclic DFA, Moore machine, or Mealy machine.
This strengthens the hardness-of-approximation results of \cite{Shimozono1998}, which were established for acyclic DFAs under general input samples.
Whether such approximation hardness still holds when the input sample is restricted to the prefix-closed sample of a single run remains an open question for future work.
We note that \cite{ueno2013hardness} obtained a hardness-of-approximation result for DFAs in the setting where the input sample consists of some, but not necessarily all, prefixes of a single run.

\section{Preliminaries}
We denote the interval between two integers $i$ and $j$ by $[i,j]=\{\, k \in \mbb{Z} \mid i \le k \le j\,\}$.

For an alphabet $\Sigma$, the set of strings over $\Sigma$ is denoted by $\Sigma^*$.
We write $\varepsilon$ for the empty string.
A string $s$ is a \emph{prefix} of a string $w$ if $w = st$ is the concatenation of $s$ and some string $t$.
By $s^k$ we denote the $k$-fold repetition of $s$.
The set of prefixes of a string $s$ is denoted by $\pre(s)$, which is extended for sets $S$ of strings by $\pre(S) = \bigcup_{s \in S} \pre(s)$. 
The set $S$ is said to be \emph{prefix-closed} if $S = \pre(S)$.
It is \emph{almost prefix-closed} if $S = \pre(S) - \{\varepsilon\}$.
String concatenation is generalized for sets by $ST=\{\,st \in \Sigma^* \mid s \in S \text{ and } t \in T\,\}$.
We write $S^* = \{\, s_1 \dots s_k \mid k \ge 0 \text{ and } s_i \in S \text{ for all $i \in [1,k]$}\,\}$.

A \emph{deterministic finite automaton (DFA)} is a tuple $M=(Q,\Sigma,\delta,F,q_0)$, where
$Q$ is the finite set of states,
$\Sigma$ is the input alphabet, 
$\delta\colon Q \times \Sigma \to Q$ is the state transition function,
$F \subseteq Q$ is the set of accepting states,
and $q_0 \in Q$ is the initial state.
The function $\delta$ is extended to $\delta^* \colon Q \times \Sigma^* \to Q$ by
$\delta^*(q,\varepsilon)=q$ and $\delta^*(q,wa)=\delta(\delta^*(q,w),a)$ for $q \in Q$, $w \in \Sigma^*$, and $a\in\Sigma$.
The DFA induces functions $M_q$ from $\Sigma^*$ to $\{+,-\}$ for each $q \in Q$ defined by $M_q(w) = +$ if and only if $\delta^*(q,w) \in F$.
We often write $M$ for $M_{q_0}$ by identifying the DFA and the induced function $M_{q_0}$.
We say that $M$ \emph{accepts} a string $w$ if $M(w)=+$ and otherwise \emph{rejects} $w$.
A \emph{DFA sample} is a pair $(D_+,D_-)$ of disjoint finite sets of strings.
The sample $(D_+,D_-)$ is \emph{(almost) prefix-complete} if $D_+ \cup D_-$ is (almost) prefix-closed.
A DFA $M$ is \emph{consistent} with $(D_+,D_-)$ if and only if $M(s) = +$ for all $s \in D_+$ and $M(s) = -$ for all $s \in D_-$.

An \emph{acyclic DFA (ADFA)} is a variant of a DFA in which the state transition function can be partial and the state transition graph is acyclic.
That is, there are no state $q$ and nonempty string $s \ne \varepsilon$ such that $\delta^*(q,s)=q$.

A \emph{Moore machine} is a tuple $M=(Q,\Sigma,\Delta,\delta,\rho,q_0)$, where
$Q$ is the finite set of states,
$\Sigma$ is the input alphabet, 
$\Delta$ is the output alphabet, 
$\delta\colon Q \times \Sigma \to Q$ is the state transition function,
$\rho\colon Q \to \Delta$ is the (state) output function,
and $q_0 \in Q$ is the initial state.
The function $\delta$ is extended to $\delta^* \colon Q \times \Sigma^* \to Q$ in the same way as those in DFAs.
The machine induces functions $M_q$ from $\Sigma^*$ to $\Delta^*$ for each $q \in Q$ defined by $M_q(\varepsilon) = \varepsilon$ and $M_q(wa) = M_q(w) \cdot \rho(\delta^*(q,wa))$ for $q \in Q$, $w \in \Sigma^*$, and $a\in\Sigma$.
We often write $M$ for $M_{q_0}$ by identifying the machine with the induced function $M_{q_0}$.
A \emph{run} of $M$ over $s \in \Sigma^*$ is the pair $(s,M(s))$.

A \emph{Mealy machine} $M=(Q,\Sigma,\Delta,\delta,\lambda,q_0)$ is defined similarly to a Moore machine, except for the (transition) output function $\lambda\colon Q \times \Sigma \to \Delta$.
The induced functions $M_q$ for $q \in Q$ from $\Sigma^*$ to $\Delta^*$ are defined by $M_q(\varepsilon) = \varepsilon$ and $M_q(wa) = M_q(w) \cdot \lambda(\delta^*(q,w),a)$.
We often identify $M$ with $M_{q_0}$.
A \emph{run} of $M$ over $s \in \Sigma^*$ is the pair $(s,M(s))$.

A \emph{machine sample set} $D$ is a finite set of string pairs $(s,t) \in \Sigma^* \times \Delta^*$ such that $|s|=|t|$.
The set $D$ is \emph{consistent} if for any $(s_1,t_1),(s_2,t_2) \in D$ and $s \in \pre(s_1) \cap \pre(s_2)$, the prefixes of length $|s|$ of $t_1$ and $t_2$ coincide.

A Moore/Mealy machine $M$ is said to be \emph{consistent} with $D$ if $M(s)=t$ for all $(s,t) \in D$.

\begin{problem}[ConDFA, ConADFA, ConMoore, ConMealy]
  The problem \Con{DFA} is to decide whether there exists a DFA with at most $m$ states consistent with $D$ for a natural number $m$ and a DFA sample $D$.

  The problems \Con{ADFA}, \Con{Moore}, and \Con{Mealy} are similarly defined.
\end{problem}
Those problems are obviously in $\mrm{NP}$.

In this paper, we are concerned with Moore/Mealy machines whose output alphabet $\Delta$ is binary, i.e., $\Delta=\{+,-\}$. 
The problem \Con{DFA} with prefix-complete instances is almost identical to \Con{Moore} with a binary output alphabet, except that the output symbol assigned to the initial state is not included in instances of \Con{Moore}.
Any consistent machine sample $D$ can be converted in polynomial time into an equivalent DFA sample $(D_+,D_-)$, which is almost prefix-complete, and vice versa.

Next, we define the graph coloring problem, one of Karp's original 21 NP-complete problems \citep{Karp1972}. We will reduce from this problem to the DFA learning one in order to prove the latter's NP-hardness.

A \emph{$K$-labeling} of a simple undirected graph $G = (V, E)$ is a map $\Phi\colon V \to [1,K]$.
It is called a \emph{$K$-coloring} if $\{u, v\} \in E$ implies $\Phi(u) \neq \Phi(v)$.
The least number $K$ such that $G$ admits a $K$-coloring is called the \emph{chromatic number} of $G$.
\begin{problem}[Graph Coloring]
    Given a simple undirected graph $G = (V, E)$ and an integer $K$, \tsc{GraphColoring} is to decide whether it admits a $K$-coloring.
\end{problem}

\section{Zhang's reduction}

\cite{Zhang} has shown that \Con{DFA} is NP-complete when the input sample set over an unbounded alphabet is prefix-closed.
We briefly review his proof with a slight modification.

Suppose we are given a graph $G=(V,E)$ and an integer $K$ as an instance of \tsc{GraphColoring}.
For $V=\{v_1,\dots,v_n\}$, we write $e_{ij}=e_{ji} = \{v_i,v_j\}$.
Define a DFA sample $(Z_+,Z_-)$ over $\Sigma = V \cup E$ by
\begin{align*}
Z_+ &= \{\varepsilon\} \cup \{\, v_i e_{ij} \in V E \mid i < j \,\},
\\
Z_- &= V \cup \{\, v_je_{ij} \in V E \mid i < j\,\}.
\end{align*}
The set $Z_+ \cup Z_-$
 is prefix-closed and the total description size is polynomially bounded in the size of the graph.
Then, the input graph is $K$-colorable if and only if there is a DFA with $K+1$ states consistent with $(Z_+,Z_-)$.
\begin{exmpl}\label{ex:zhang}
    Consider the graph shown in Figure~\ref{fig:inputgraph} whose chromatic number is $K=3$.
    The partial DFAs in Figures~\ref{fig:Zhang_trivial} and \ref{fig:Zhang_reduction} are consistent with the sample $(Z_+,Z_-)$ obtained from the graph.
\end{exmpl}
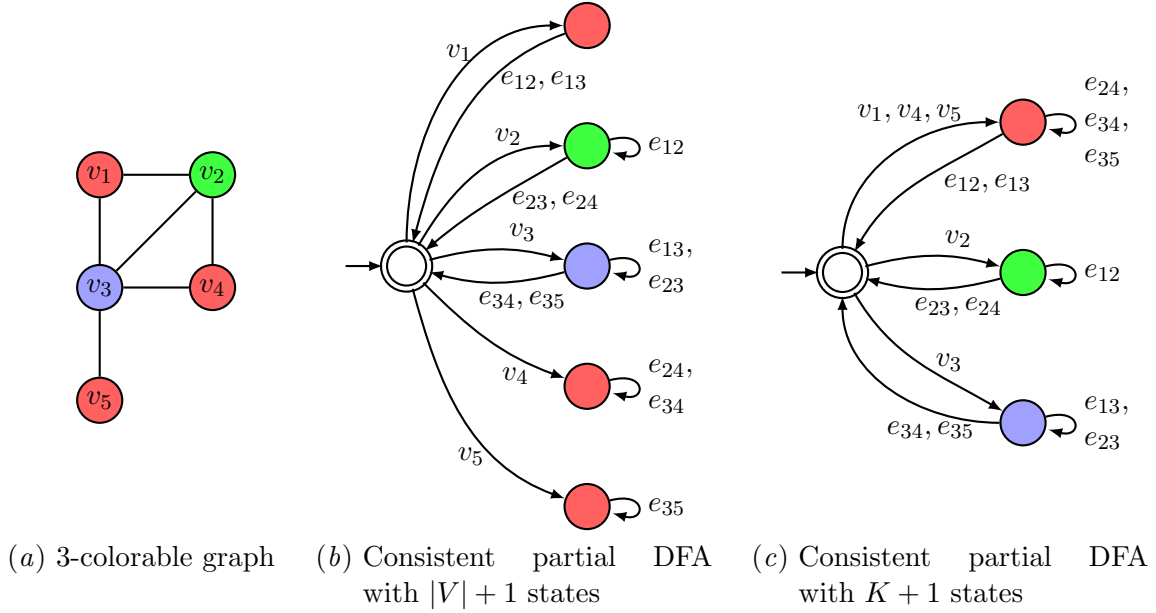
\begin{figure}[tbp]
\floatconts{fig:zhang}
  {\caption{Example of (a modification of) Zhang's reduction.}}
  {
  \subfigure[3-colorable graph\label{fig:inputgraph}]{
    \begin{tikzpicture}[
    vertex/.style={circle, draw=black, thick, minimum size=6mm, inner sep=0pt, font=\bfseries},
    edge/.style={thick}
]
      \node (dumy) at (-1, -1.6) {};
      \node (dummy) at (2.1, -1.6) {};
        \node[vertex, fill=lred] (v1) at (0, 3) {$v_1$};
        \node[vertex, fill=lgreen] (v2) at (1.5, 3) {$v_2$};
        \node[vertex, fill=lblue] (v3) at (0, 1.5) {$v_3$};
        \node[vertex, fill=lred] (v4) at (1.5, 1.5) {$v_4$};
        \node[vertex, fill=lred] (v5) at (0, 0) {$v_5$};
        \draw[edge] (v1) -- (v2);
        \draw[edge] (v1) -- (v3);
        \draw[edge] (v2) -- (v3);
        \draw[edge] (v2) -- (v4);
        \draw[edge] (v3) -- (v4);
        \draw[edge] (v3) -- (v5);
\end{tikzpicture}
  }
 \quad
  \subfigure[Consistent partial DFA with $|V|+1$ states\label{fig:Zhang_trivial}]{
    \begin{tikzpicture}[
    node distance=1.2cm and 1.0cm,
    >=latex,
    thick,
    state_base/.style={circle, draw, minimum size=6mm, inner sep=0pt},
    start_state/.style={state_base, double, double distance=1.5pt, fill=white},
    edge_label/.style={midway, font=\small, text=black}
]
    \node[start_state, initial, initial text=] (start) at (0,0) {};
    \node[state_base,fill=lred] (v1) at (2.4, 3.2) {};
    \node[state_base,fill=lgreen] (v2) at (2.4, 1.6) {};
    \node[state_base,fill=lblue] (v3) at (2.4, 0) {};
    \node[state_base,fill=lred] (v4) at (2.4, -1.6) {};
    \node[state_base,fill=lred] (v5) at (2.4, -3.2) {};
      \draw[->,out=90,in=180] (start) to node[pos=0.7, left] {$v_1$} (v1);
      \draw[<-,out=75,in=200] (start) to node[pos=0.67, below, anchor=west] {$e_{12},e_{13}$} (v1);
      \draw[->,out=60,in=180] (start) to node[pos=0.7, above] {$v_2$} (v2);
      \draw[<-,out=40,in=210] (start) to node[pos=0.56, below=2pt, anchor=west] {$e_{23},e_{24}$} (v2);
      \draw[->,out=15,in=165] (start) to node[pos=0.7, above] {$v_3$} (v3);
      \draw[<-,out=-15,in=195] (start) to node[pos=0.7, below, anchor = north] {$e_{34},e_{35}$} (v3);
      \draw[->,out=-45,in=160] (start) to node[pos=0.7, below] {$v_4$} (v4);
      \draw[->,out=-75,in=160] (start) to node[pos=0.7, left] {$v_5$} (v5);
      \path (v2) edge[loop right] node {$e_{12}$} ();
      \path (v3) edge[loop right] node[align=left] {$e_{13}$,\\$e_{23}$} ();
      \path (v4) edge[loop right] node[align=left] {$e_{24}$,\\$e_{34}$} ();
      \path (v5) edge[loop right] node {$e_{35}$} ();
\end{tikzpicture}
  }
 \quad
  \subfigure[Consistent partial DFA with $K+1$ states\label{fig:Zhang_reduction}]{
    \begin{tikzpicture}[
    node distance=1.2cm and 1.0cm,
    >=latex,
    thick,
    state_base/.style={circle, draw, minimum size=6mm, inner sep=0pt},
    start_state/.style={state_base, double, double distance=1.5pt, fill=white},
    edge_label/.style={midway, font=\small, text=black}
]
    \node (dummy) at (0, -3.3) {};
    \node[start_state, initial, initial text=] (start) at (0,0) {};
    \node[state_base,fill=lred] (r) at (2.4, 2) {};
    \node[state_base,fill=lgreen] (g) at (2.4, 0) {};
    \node[state_base,fill=lblue] (b) at (2.4, -2) {};
      \draw[->,out=90,in=180] (start) to node[pos=0.6, above=4pt] {$v_1,v_4,v_5$} (r);
      \draw[<-,out=60,in=210] (start) to node[pos=0.6, below=4pt, anchor=west] {$e_{12},e_{13}$} (r);
      \draw[->,out=15,in=165] (start) to node[pos=0.7, above] {$v_2$} (g);
      \draw[<-,out=-15,in=195] (start) to node[pos=0.7, below, anchor = north] {$e_{23},e_{24}$} (g);
      \draw[->,out=-60,in=150] (start) to node[pos=0.7, above] {$v_3$} (b);
      \draw[<-,out=-90,in=180] (start) to node[pos=0.7, below, anchor = north] {$e_{34},e_{35}$} (b);
      \path (r) edge[loop right] node[align=left] {$e_{24}$,\\$e_{34}$,\\$e_{35}$} ();
      \path (g) edge[loop right] node[align=left] {$e_{12}$} ();
      \path (b) edge[loop right] node[align=left] {$e_{13}$,\\$e_{23}$} ();
\end{tikzpicture}
  }
  }
\end{figure}

Suppose the input graph $G$ admits a $K$-coloring function $\Phi \colon V \to \{1,\dots,K\}$.
Then, we define a partial DFA $(Q,\Sigma,\delta,F,0)$ by $Q = \{0,1,\dots,K\}$, $F=\{0\}$, and 
\begin{align*}
  \delta(0,v_i) &= \Phi(v_i),
\\
\delta(k,e_{ij}) &= \begin{cases}
    0 & \text{ if $\Phi(i)=k$ and $i<j$,}
\\  k & \text{ if $\Phi(j)=k$ and $i<j$,}
\end{cases}
\end{align*}
for all $k \in \{1,\dots,K\}$, $v_i \in V$, and $e_{ij} \in E$.
The transition function $\delta$ is well-defined since $\Phi(v_i) \ne \Phi(v_j)$ for all $e_{ij} \in E$.
Clearly, the DFA is consistent with $(Z_+,Z_-)$.

Conversely, suppose that $(Z_+,Z_-)$ admits a consistent DFA $M=(Q,\Sigma,\delta,F,q_0)$ with $K+1$ states.
Since $M$ accepts some strings, there are at most $K$ rejecting states.
We may assume without loss of generality that the rejecting states of $M$ are named $1,\dots,K'$ for some $K' \le K$.
Since every $v_i$ is rejected, $\delta(q_0,v_i)$ is a rejecting state.
Define a function $\Phi \colon V \to [1,K']$ by $\Phi(v_i)=\delta(q_0,v_i)$.
For each edge $e_{ij}$ of $G$ with $i < j$, since $v_i e_{ij} \in Z_+$ and $v_j e_{ij} \in Z_-$, we must have $\Phi(v_i)=\delta(q_0,v_i) \ne \delta(q_0,v_j)=\Phi(v_j)$.
Thus, $\Phi$ is a proper $K$-coloring.

For more details, see \citeauthor{Zhang}'s original paper.
\section{Learning DFAs from prefix-complete samples of binary strings}
\label{sec:binary}
This section shows that \Con{DFA} and \Con{ADFA} are NP-hard even to approximate when the input alphabet is binary and the sample is prefix-complete, based on the reduction presented in the previous section.
Our argument implies that \Con{Moore} and \Con{Mealy} over binary alphabets are also hard.

\subsection{NP-hardness of the decision problem}\label{sec:NPcomp_multi}
For an input graph $G=(V,E)$ of \tsc{GraphColoring}, let $\tilde{v}_{i},\tilde{e}_{ij} \in \{0,1\}^*$ be binary encodings\footnote{Collision between $\tilde{v}_i$ and $\tilde{e}_{jk}$ is allowed.} of lengths $\lceil \log_2 |V| \rceil$ and $\lceil \log_2 |E| \rceil$ for $v_i \in V$ and $e_{ij} \in E$, respectively.

Let $L$ be any integer polynomially bounded in $|V|$ and greater than $4|V|+2|E|\lrceil{\log_2|E|}$, and define
\begin{align*}
S &= \{\, \tilde{v}_i 0^L \tilde{e}_{ij} \mid v_i \in V,\ {e}_{ij} \in E \,\},
\\
S_+ &= \{\, \tilde{v}_i 0^L \mid v_i \in V \,\} \cup \{\, \tilde{v}_i 0^L \tilde{e}_{ij} \mid v_i \in V,\ {e}_{ij} \in E,\ i < j \,\},
\\
S_- &= \pre(S) - S_+.
\end{align*}
The sample $(S_+ , S_-)$ is prefix-complete and the total description size is polynomially bounded in the size of the graph.
We note that the above construction is independent of the coloring number $K$.
We call the three substrings $\tilde{v}_i$, $0^L$, and $\tilde{e}_{ij}$ of $\tilde{v}_i 0^L \tilde{e}_{ij} \in S$ the \emph{head}, the \emph{body}, and the \emph{tail}, respectively.

\begin{figure}[tbp]
    \centering
\begin{tikzpicture}[
    node distance=1.2cm and 1.0cm,
    >=latex,
    thick,
    state_base/.style={circle, draw, minimum size=6mm, inner sep=0pt},
    start_state/.style={state_base},
    accept_state/.style={state_base, fill=white, double, double distance=1.5pt},
    white_state/.style={state_base, fill=white},
    red_state/.style={state_base, fill=lred},
    green_state/.style={state_base, fill=lgreen},
    blue_state/.style={state_base, fill=lblue},
    edge0/.style={->, draw=rb},
    edge1/.style={->, draw=gb},
    state1/.style={red_state},
    state2/.style={green_state},
    state3/.style={blue_state},
    state4/.style={red_state},
    state5/.style={red_state},   
    edge_label/.style={midway, font=\small, text=black}
]
    
    \node (head) at (1,-1.6) {Head};
    \draw[dashed] (3.2,-1.8) -- (3.2,8);
    \node[start_state, initial, initial text=] (start) at (0,3.0) {};
    \node[white_state] (u0) at (0.9,4) {};
    \node[white_state] (u00) at (1.8,5.2) {};
    \node[white_state] (u01) at (1.8,2.8) {};
    \node[white_state] (u1) at (0.9,1.8) {};
    \node[white_state] (u10) at (1.8,0.8) {};
    \node[red_state,   label=above:{$\tilde{v}_1$}] (v1) at (3,6.4) {};
    \node[green_state, label=above:{$\tilde{v}_2$}] (v2) at (3,4.8) {};
    \node[blue_state,  label=above:{$\tilde{v}_3$}] (v3) at (3,3.2) {};
    \node[red_state,   label=above:{$\tilde{v}_4$}] (v4) at (3,1.6) {};
    \node[red_state,   label=above:{$\tilde{v}_5$}] (v5) at (3,0) {};

    \draw[edge0] (start) -- node[edge_label, above left=-1mm] {\textcolor{rb}{0}} (u0);
    \draw[edge1] (start) -- node[edge_label, below left=-1mm] {\textcolor{gb}{1}} (u1);
    \draw[edge0] (u0) -- (u00);
    \draw[edge1] (u0) -- (u01);
    \draw[edge0] (u1) -- (u10);
    \draw[edge0] (u00) -- (v1);
    \draw[edge1] (u00) -- (v2);
    \draw[edge0] (u01) -- (v3);
    \draw[edge1] (u01) -- (v4);
    \draw[edge0] (u10) -- (v5);
  
    \node (body) at (6,-1.6) {Body};
    \draw[dashed] (9.2,-1.8) -- (9.2,8);
  \draw[decorate,decoration={brace,amplitude=6pt}] (3.2,7.3) -- (9,7.3) node[midway,above=5pt] {$L$};
    \foreach \i in {1,...,5} {
        \node[state\i, right=0.6cm of v\i] (b\i_1) {};
        \node[state\i, right=0.6cm of b\i_1] (b\i_2) {};
        \node[right=0.4cm of b\i_2] (dots\i) {\textcolor{rb}{$\cdots$}};
        \node[state\i, right=0.4cm of dots\i] (b\i_last) {};
        \node[accept_state,state\i, right=0.6cm of b\i_last] (f\i) {};
        \draw[edge0] (v\i) -- (b\i_1);
        \draw[edge0] (b\i_1) -- (b\i_2);
        \draw[edge0] (b\i_2) -- (dots\i);
        \draw[edge0] (dots\i) -- (b\i_last);
        \draw[edge0] (b\i_last) -- (f\i);
    }
    
    \node (tail) at (10.6,-1.6) {Tail};
    \begin{scope}[xshift=9cm,yshift=-0.2cm]
    \node[accept_state, label=right:{$\tilde{e}_{12}$}] (e12) at (3,8) {};
    \node[accept_state, label=right:{$\tilde{e}_{13}$}] (e13) at (3,7.2) {};
    \node[white_state,  label=right:{$\tilde{e}_{12}$}] (e21) at (3,6.4) {};
    \node[accept_state, label=right:{$\tilde{e}_{23}$}] (e23) at (3,5.6) {};
    \node[accept_state, label=right:{$\tilde{e}_{24}$}] (e24) at (3,4.8) {};
    \node[white_state,  label=right:{$\tilde{e}_{13}$}] (e31) at (3,4) {};
    \node[white_state,  label=right:{$\tilde{e}_{23}$}] (e32) at (3,3.2) {};
    \node[accept_state, label=right:{$\tilde{e}_{34}$}] (e34) at (3,2.4) {};
    \node[accept_state, label=right:{$\tilde{e}_{35}$}] (e35) at (3,1.6) {};
    \node[white_state,  label=right:{$\tilde{e}_{24}$}] (e42) at (3,0.8) {};
    \node[white_state,  label=right:{$\tilde{e}_{34}$}] (e43) at (3,0) {};
    \node[white_state,  label=right:{$\tilde{e}_{35}$}] (e53) at (3,-0.8) {};

    \node[white_state] (f1_0) at (1,7.1) {};
    \node[white_state] (f1_00) at (2,7.4) {};
    \node[white_state] (f2_0) at (1,5.6) {};
    \node[white_state] (f2_00) at (2,6.4) {};
    \node[white_state] (f2_01) at (2,5.6) {};
    \node[white_state] (f3_0) at (1,3.6) {};
    \node[white_state] (f3_00) at (2,4) {};
    \node[white_state] (f3_01) at (2,3.2) {};
    \node[white_state] (f3_1) at (1,2.8) {};
    \node[white_state] (f3_10) at (2,2.4) {};
    \node[white_state] (f4_0) at (1,1.2) {};
    \node[white_state] (f4_01) at (2,1.0) {};
    \node[white_state] (f4_1) at (1,0.4) {};
    \node[white_state] (f4_10) at (2,0.2) {};
    \node[white_state] (f5_1) at (1,-0.6) {};
    \node[white_state] (f5_10) at (2,-0.7) {};
    \end{scope}
    \draw[edge0] (f1) -- (f1_0);
    \draw[edge0] (f1_0) -- (f1_00);
    \draw[edge0] (f1_00) -- (e12);
    \draw[edge1] (f1_00) -- (e13);
    \draw[edge0] (f2) -- (f2_0);
    \draw[edge0] (f2_0) -- (f2_00);
    \draw[edge0] (f2_00) -- (e21);
    \draw[edge1] (f2_0) -- (f2_01);
    \draw[edge0] (f2_01) -- (e23);
    \draw[edge1] (f2_01) -- (e24);
    \draw[edge0] (f3) -- (f3_0);
    \draw[edge1] (f3) -- (f3_1);
    \draw[edge0] (f3_0) -- (f3_00);
    \draw[edge1] (f3_00) -- (e31);
    \draw[edge1] (f3_0) -- (f3_01);
    \draw[edge0] (f3_01) -- (e32);
    \draw[edge0] (f3_1) -- (f3_10);
    \draw[edge0] (f3_10) -- (e34);
    \draw[edge1] (f3_10) -- (e35);
    \draw[edge0] (f4) -- (f4_0);
    \draw[edge1] (f4_0) -- (f4_01);
    \draw[edge1] (f4) -- (f4_1);
    \draw[edge0] (f4_1) -- (f4_10);
    \draw[edge1] (f4_01) -- (e42);
    \draw[edge0] (f4_10) -- (e43);
    \draw[edge1] (f5) -- (f5_1);
    \draw[edge0] (f5_1) -- (f5_10);
    \draw[edge1] (f5_10) -- (e53);
\end{tikzpicture}
    \caption{Tree representation of the sample $(S_+,S_-)$ obtained from the graph in Figure~\ref{fig:inputgraph}.
    The state marked with $\tilde{v}_i$ is reached by reading $\tilde{v}_i$.
    The two states marked with $\tilde{e}_{ij}$ are reached by reading $\tilde{v}_i 0^L \tilde{e}_{ij}$ and $\tilde{v}_j 0^L \tilde{e}_{ij}$.
    The colors of states correspond to the coloring on the graph in Figure~\ref{fig:inputgraph}.}
    \label{fig:consistentPTA}
\end{figure}
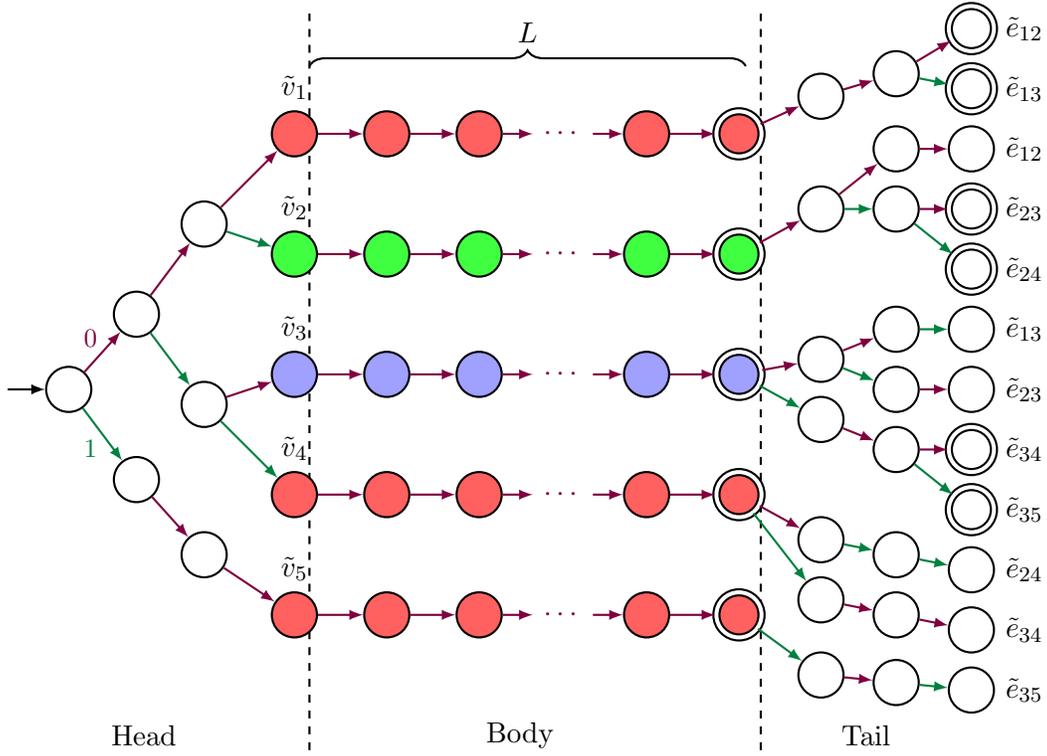
\begin{figure}
    \centering
\begin{tikzpicture}[
    node distance=1.2cm and 1.0cm,
    >=latex,
    thick,
    state_base/.style={circle, draw, minimum size=6mm, inner sep=0pt},
    start_state/.style={state_base},
    accept_state/.style={state_base, fill=white, double, double distance=1.5pt},
    white_state/.style={state_base, fill=white},
    red_state/.style={state_base, fill=lred},
    green_state/.style={state_base, fill=lgreen},
    blue_state/.style={state_base, fill=lblue},
    edge0/.style={->, draw=rb},
    edge1/.style={->, draw=gb},
    state1/.style={red_state},
    state2/.style={green_state},
    state3/.style={blue_state},
    state4/.style={red_state},
    state5/.style={red_state},   
    edge_label/.style={midway, font=\small, text=black}
]
    \newcommand{\pedge}[2]{
      \draw[edge0,transform canvas={yshift=1pt}] (#1) -- (#2);
      \draw[edge1,transform canvas={yshift=-1pt}] (#1) -- (#2);
    }
    \newcommand{\pedgel}[2]{
      \draw[edge0,transform canvas={yshift=1pt}] (#1) -- node[edge_label, above] {\textcolor{rb}{0}} (#2);
      \draw[edge1,transform canvas={yshift=-1pt}] (#1) -- node[edge_label, below] {\textcolor{gb}{1}} (#2);
    }
    
    \node (head) at (1,0) {Head};
    \draw[dashed] (3.2,0) -- (3.2,6);
    \node[start_state, initial, initial text=] (start) at (0,3.2) {};
    \node[white_state] (u0) at (1.1,3.2) {};
    \node[white_state] (u00) at (2.0,4.0) {};
    \node[white_state] (u01) at (2.0,2.4) {};
    \node[green_state] (v2) at (3,4.8) {};
    \node[red_state] (v1) at (3,3.2) {};
    \node[blue_state] (v3) at (3,1.6) {};
    \pedgel{start}{u0}
    \draw[edge0] (u0) -- (u00);
    \draw[edge1] (u0) -- (u01);
    \draw[edge0] (u00) -- (v1);
    \draw[edge1] (u00) -- (v2);
    \draw[edge0] (u01) -- (v3);
    \draw[edge1] (u01) -- (v1);
  
    \node (body) at (6,0) {Body};
    \draw[dashed] (9.2,0) -- (9.2,6);
    \foreach \i in {1,...,3} {
        \node[state\i, right=0.6cm of v\i] (b\i_1) {};
        \node[state\i, right=0.6cm of b\i_1] (b\i_2) {};
        \node[right=0.4cm of b\i_2] (dots\i) {$\cdots$};
        \node[state\i, right=0.4cm of dots\i] (b\i_last) {};
        \node[accept_state,state\i, right=0.6cm of b\i_last] (f\i) {};
        \draw[edge0] (v\i) -- (b\i_1);
        \draw[edge0] (b\i_1) -- (b\i_2);
        \draw[edge0] (b\i_2) -- (dots\i);
        \draw[edge0] (dots\i) -- (b\i_last);
        \draw[edge0] (b\i_last) -- (f\i);
    }
    
    \node (tail) at (10.6,0.0) {Tail};
    \begin{scope}[xshift=9cm]
    \node[accept_state] (a1) at (3,3.7) {};
    \node[white_state] (b1) at (3,2.7) {};
    \node[accept_state] (a2) at (3,4.7) {};
    \node[white_state] (b2) at (3,5.7) {};
    \node[accept_state] (a3) at (3,0.7) {};
    \node[white_state] (b3) at (3,1.7) {};
    
    \node[white_state] (f1_0) at (1,3.7) {};
    \node[white_state] (f1_00) at (2,3.7) {};
    \node[white_state] (f1_1) at (1,2.7) {};
    \node[white_state] (f1_10) at (2,2.7) {};
    \node[white_state] (f2_0) at (1,5.0) {};
    \node[white_state] (f2_00) at (2,5.7) {};
    \node[white_state] (f2_01) at (2,4.7) {};
    \node[white_state] (f3_0) at (1,1.7) {};
    \node[white_state] (f3_00) at (2,1.7) {};
    \node[white_state] (f3_1) at (1,0.7) {};
    \node[white_state] (f3_10) at (2,0.7) {};
    \end{scope}
    \draw[edge0] (f1) -- (f1_0);
    \draw[edge1] (f1) -- (f1_1);
    \draw[edge0] (f1_0) -- (f1_00);
    \draw[edge1] (f1_0) -- (f1_10);
    \draw[edge0] (f1_1) -- (f1_10);
    \pedge{f1_00}{a1};
    \pedge{f1_10}{b1};
    \draw[edge0] (f2) -- (f2_0);
    \draw[edge0] (f2_0) -- (f2_00);
    \draw[edge1] (f2_0) -- (f2_01);
    \draw[edge0] (f2_00) -- (b2);
    \pedge{f2_01}{a2};
    \draw[edge0] (f3) -- (f3_0);
    \draw[edge1] (f3) -- (f3_1);
    \pedge{f3_0}{f3_00};
    \pedge{f3_00}{b3};
    \draw[edge0] (f3_1) -- (f3_10);
    \pedge{f3_10}{a3};
\end{tikzpicture}
    \caption{Consistent ADFA with fewer than $(K+1)L = 4L$ states obtained by merging some states of the tree automaton in Figure~\ref{fig:consistentPTA}.}
    \label{fig:consistentADFA}
\end{figure}
    The tree automaton in Figure~\ref{fig:consistentPTA} represents the sample obtained from the graph in Figure~\ref{fig:inputgraph}, where vertices and edges are encoded as
    $\tilde{v}_1=000$, $\tilde{v}_2=001$, $\tilde{v}_3=010$, $\tilde{v}_4=011$, $\tilde{v}_5=100$,
     $\tilde{e}_{12}=000$, $\tilde{e}_{13}=001$, $\tilde{e}_{23}=010$, $\tilde{e}_{24}=011$, $\tilde{e}_{34}=100$, and $\tilde{e}_{35}=101$.
    The automaton is divided into three parts.
    The strings of $S$ correspond one-to-one with the paths from the initial state to leaf states.
    The group of states reached by reading some prefixes of $\tilde{v}_i$ are called the \emph{head},
    those reached by $\tilde{v}_i 0^h$ for $h \le L$ constitute the \emph{body}, and the rest is the \emph{tail}.
    Let us call a sequence of distinct states $q_1,\dots,q_k$ a \emph{$0$-chain} if $\delta(q_h,0)=q_{h+1}$ for all $h \in [1,k-1]$.
    The body of the tree automaton consists of $|V|$ $0$-chains.
    Each 0-chain is bigger than the total size of the head and the tail.
    Therefore, to substantially reduce the size of the automaton, we must merge the states in the body.
    Recall that $\tilde{v}_i 0^h$ is rejected for every $h < L$ but $\tilde{v}_i 0^L$ is accepted.
    This means that each 0-chain cannot be folded.
    We must unify 0-chains.
    Just like Zhang's reduction, this corresponds to grouping vertices of the graph by distinct color classes.
    By merging the states of the tree automaton in Figure~\ref{fig:consistentPTA}, we obtain the ADFA in Figure~\ref{fig:consistentADFA}.
    This ADFA has fewer than $(K+1)L = 4L$ states.
    Some of its states can be further merged to obtain an even smaller automaton, but the size cannot be reduced to $3L$ or less.
    The number of $0$-chains is the dominant factor.

\begin{lemma}\label{lem:binary}
   A graph $G$ admits a $K$-coloring if and only if $(S_+,S_-)$ admits a consistent DFA with fewer than $m=(K+1)L$ states.
\end{lemma}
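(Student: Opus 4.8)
The plan is to prove the two directions by explicit translations between $K$-colorings of $G$ and consistent DFAs of fewer than $m=(K+1)L$ states, organized around the head/body/tail decomposition of the prefix tree of $S$ discussed above. Throughout I assume $G$ has no isolated vertex: an isolated vertex can be given any existing color and affects neither $K$-colorability (for $K\ge1$) nor the reduction, and under this assumption every string $\tilde v_i0^h$ with $h\le L$ lies in $\pre(S)$ and is therefore classified by the sample, being rejected for $h<L$ and accepted for $h=L$.

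For the forward direction, suppose $\Phi\colon V\to[1,K]$ is a $K$-coloring. I construct a partial DFA in three parts. The \emph{head} is the prefix tree of $\{\tilde v_i\mid v_i\in V\}$; its leaf $\tilde v_i$ sends the symbol $0$ to the first state of body chain number $\Phi(v_i)$. The \emph{body} consists of $K$ disjoint $0$-chains, chain $k$ having states $(k,1),\dots,(k,L)$ where $(k,h)$ is reached by $\tilde v_i0^h$ for every $v_i$ with $\Phi(v_i)=k$, and only the states $(k,L)$ are accepting. The \emph{tail} attaches to each $(k,L)$, for every edge $\{v_i,v_j\}\in E$ with $\Phi(v_i)=k$, a path spelling $\tilde e_{ij}$ whose endpoint is accepting exactly when $i<j$; this is well defined precisely because $\Phi$ is proper, so the two endpoints of an edge never share a color and the demands placed on a single state $(k,L)$ can never conflict. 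Sending every still-undefined transition to the (rejecting) initial state makes $\delta$ total while leaving untouched the transitions taken along any prefix of $S$, so consistency can be read off directly. Finally, using that the head is a binary tree of depth $\lceil\log_2|V|\rceil$ (hence fewer than $4|V|$ states), that each edge contributes to exactly two tails, and the hypothesis $L>4|V|+2|E|\lceil\log_2|E|\rceil$, the total size is $|\text{head}|+|\text{body}|+|\text{tail}|<4|V|+KL+2|E|\lceil\log_2|E|\rceil<KL+L=m$.

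For the converse, let $M=(Q,\{0,1\},\delta,F,q_0)$ be consistent with $(S_+,S_-)$ with $|Q|<(K+1)L$, and define $\Phi(v_i)=\delta^*(q_0,\tilde v_i0^L)\in Q$. First, $\Phi$ is \emph{proper}: if $\{v_i,v_j\}\in E$ with $i<j$ and $\delta^*(q_0,\tilde v_i0^L)=\delta^*(q_0,\tilde v_j0^L)$, then appending $\tilde e_{ij}$ gives $\delta^*(q_0,\tilde v_i0^L\tilde e_{ij})=\delta^*(q_0,\tilde v_j0^L\tilde e_{ij})$, contradicting that the first string is in $S_+$ while the second is in $S_-$. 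Next, to bound the number of colors, put $A_h=\{\delta^*(q_0,\tilde v_i0^h)\mid v_i\in V\}$ for $h\in[0,L]$; then $A_L=\Phi(V)$ and $A_{h+1}=\{\delta(q,0)\mid q\in A_h\}$, so $|A_L|\le|A_{L-1}|\le\dots\le|A_0|$. The sets $A_0,\dots,A_L$ are pairwise disjoint: if $q\in A_h\cap A_{h'}$ with $h<h'$, write $q=\delta^*(q_0,\tilde v_i0^h)=\delta^*(q_0,\tilde v_j0^{h'})$; appending $0^{L-h'}$ gives $\delta^*(q_0,\tilde v_i0^{h+L-h'})=\delta^*(q_0,\tilde v_j0^L)$ with $h+L-h'\in[0,L-1]$, so the left string is rejected and the right one accepted, which is impossible. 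Hence $|Q|\ge\bigl|\bigcup_{h=0}^{L}A_h\bigr|=\sum_{h=0}^{L}|A_h|\ge(L+1)\,|A_L|$, whence $|A_L|<(K+1)L/(L+1)<K+1$; relabeling the at most $K$ values of $\Phi$ by elements of $[1,K]$ produces a $K$-coloring of $G$.

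The step I expect to be the main obstacle is the color-counting in the converse. It is natural to try coloring $v_i$ by the \emph{start} state $\delta^*(q_0,\tilde v_i)$ of its $0$-chain, but the size bound places no limit on how many distinct start states $M$ may have, since the chains are free to merge almost immediately. The fix is to color by the chain-\emph{end} states instead: because $\tilde v_i0^h$ is rejected for every $h<L$ but $\tilde v_i0^L$ is accepted, the $L+1$ level sets $A_0,\dots,A_L$ of the $0$-chains must occupy pairwise disjoint portions of $Q$, so a machine with fewer than $(K+1)L$ states cannot sustain more than $K$ distinct end states; and these end states are already discriminating enough to separate adjacent vertices.
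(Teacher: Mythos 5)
Your proof is correct and follows essentially the same route as the paper: the forward construction (head tree, $K$ disjoint $0$-chains, per-edge tails, size bounded via $L>4|V|+2|E|\lceil\log_2|E|\rceil$) is identical, and your converse colors vertices by the chain-end states $\delta^*(q_0,\tilde v_i0^L)$ exactly as the paper does. The only cosmetic difference is that you count via pairwise-disjoint level sets $A_0,\dots,A_L$ of non-increasing size rather than via pairwise-disjoint $0$-chains, which yields the same bound $|Q|\ge(L+1)|A_L|$; your explicit handling of isolated vertices is a small point the paper leaves implicit.
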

\begin{proof}
Suppose $G$ admits a $K$-coloring $\Phi$.
We present a consistent acyclic DFA $M = (Q,\Sigma,\delta,F,q_0)$ with fewer than $(K+1)L$ states.
By arbitrarily defining transition edges missing from $M$, one can obtain a (total) DFA with the same number of states.
The automaton has $K$ $0$-chains of $q_{k,0},\dots,q_{k,L} \in Q$ among which only $q_{k,L}$ is accepting for all $k \in [1,K]$.
We will traverse those $0$-chains when reading the body $0^L$ of each example in $S$.
Concerning the head strings $\tilde{v}_i$, the transition function $\delta$ is defined so that
\[
\delta^*(q_0,\tilde{v}_i) = q_{k,0} \text{ for $k=\Phi(v_i)$}
\,,\]
which requires fewer than $4|V|$ auxiliary states, since this part can be no bigger than the perfect binary tree of height $\lrceil{\log_2 |V|}$.
For the tail strings $\tilde{e}_{ij}$, from the state $q_{k,L}$, the DFA has paths labeled with $\tilde{e}_{ij}$ for $\Phi(v_i) = k$ or $\Phi(v_j)=k$.
The terminal state of the path labeled with $\tilde{e}_{ij}$ from $q_{k,L}$ is accepting if and only if $\Phi(v_i) = k$ and $i < j$.
Since $e_{ij} \in E$ implies $\Phi(v_i) \ne \Phi(v_j)$, such a construction is possible.
At most $2|E| \lrceil{\log_2 |E|}$ states suffice for the tail part since we have $2|E|$ paths each of which has length $\lrceil{\log_2 |E|}$.
All in all, the DFA $M$ has fewer than $4|V|+KL+2|E|\lrceil{\log_2 |E|} \le m$ states.

Conversely, suppose that $(S_+,S_-)$ admits a consistent DFA with at most $m$ states.
Let $q_{i,h} = \delta^*(q_0,\tilde{v}_i 0^h)$ for $i \in [1,n]$ and $h \in [0,L]$.
Since only $\tilde{v}_i 0^L$ is accepted, $q_{i,h} \ne q_{i,h'}$ for any $h < h' \le L$.
On the other hand, it is possible that $q_{i,L} = q_{j,L}$ for distinct $i$ and $j$.
If $q_{i,L} \ne q_{j,L}$, obviously $q_{i,h} \ne q_{j,h'}$ for all $h,h' \le L$.
That is, the 0-chains ending at $q_{i,L}$ and $q_{j,L}$ are disjoint.
Since $M$ has fewer than $(K+1)L$ states, it can contain at most $K$ disjoint 0-chains of length $L$.
This implies that the set $\{\,q_{i,L}\mid i \in [1,n]\,\}$ has at most $K$ elements.
Let us color the vertices of $V$ so that $v_i$ and $v_j$ have the same color if and only if $q_{i,L}=q_{j,L}$.
This gives a $K$-labeling.
If $e_{ij} \in E$ with $i < j$, the fact that $\tilde{v}_i 0^L \tilde{e}_{ij} \in S_+$ and $\tilde{v}_j 0^L \tilde{e}_{ij} \in S_-$ implies $q_{i,L} \ne q_{j,L}$.
That is, $v_i$ and $v_j$ are colored differently.
Therefore, this coloring is a proper $K$-coloring on the graph.
\end{proof}

\begin{theorem}
\label{condfa}
\Con{DFA} and \Con{ADFA} are NP-complete when the input alphabet is binary and the instance sample is prefix-complete.
\Con{Moore} and \Con{Mealy} are NP-complete when the input alphabet is binary.
\end{theorem}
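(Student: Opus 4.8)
The plan is to read everything off Lemma~\ref{lem:binary} and the prefix-complete binary sample $(S_+,S_-)$ built above, together with the correspondence between prefix-complete DFA samples and binary-output machine samples recorded in the preliminaries; membership in $\mrm{NP}$ has already been observed, so only hardness remains. For \Con{DFA} and \Con{ADFA}, I would first note that $(G,K)\mapsto((S_+,S_-),m)$ with $m=(K+1)L$ is a polynomial-time map: $L$ is polynomially bounded in $|V|$, the encodings $\tilde v_i,\tilde e_{ij}$ have logarithmic length, and $S_+\cup S_-=\pre(S)$ is prefix-closed over $\{0,1\}$. Lemma~\ref{lem:binary} is exactly the correctness statement for \Con{DFA}. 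For \Con{ADFA}, I would point out that the lemma's forward construction already yields an \emph{acyclic} DFA, while its converse applies to an arbitrary DFA, of which an ADFA is a special case -- and a partial ADFA can be completed with no new states by redirecting the missing transitions to an existing state, which does not change behaviour on $\pre(S)$. Hence $G$ is $K$-colorable iff $(S_+,S_-)$ admits a consistent ADFA with fewer than $m$ states, so \Con{ADFA} is NP-hard on prefix-complete binary samples as well. (We may assume $G$ has no isolated vertex; such vertices do not affect $K$-colorability for $K\ge 1$ and can be removed.)

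For \Con{Moore}, I would replace $(S_+,S_-)$ by the almost-prefix-complete sample $(S_+,S_-\setminus\{\varepsilon\})$. The proof of Lemma~\ref{lem:binary} never uses the pair $\varepsilon\in S_-$ -- every prefix it reasons about is nonempty -- so the equivalence ``$G$ is $K$-colorable iff there is a consistent DFA with fewer than $m$ states'' persists for this sample. Using the conversion from the preliminaries, turn it into the equivalent machine sample $D$ over binary input and output alphabets: the runs $(s,t)$ with $s\in S$ and the $h$-th symbol of $t$ equal to $+$ precisely when the length-$h$ prefix of $s$ lies in $S_+$. A DFA consistent with $(S_+,S_-\setminus\{\varepsilon\})$ and a Moore machine consistent with $D$ then correspond with no change in the number of states: from the DFA put $\rho(q)=+$ iff $q$ is accepting; from the Moore machine put $F=\{q:\rho(q)=+\}$, the output at $q_0$ being irrelevant since $\varepsilon$ contributes nothing to any run. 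Composing with Lemma~\ref{lem:binary} gives NP-hardness of \Con{Moore}.

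For \Con{Mealy}, I would use the \emph{same} machine sample $D$ and argue directly that $G$ is $K$-colorable iff $D$ admits a consistent Mealy machine with fewer than $m$ states. The forward direction is easy: convert the acyclic DFA supplied by Lemma~\ref{lem:binary} (completed to a total transition function as in the lemma) into a Mealy machine by labelling each transition with the status of its target, i.e.\ $\lambda(q,a)=+$ iff $\delta(q,a)$ is accepting; this preserves the state count and realizes $D$. The converse is the heart of the matter. Given a consistent Mealy machine $N$, write $q_{i,h}=\delta^*_N(q_0,\tilde v_i0^h)$; the outputs that $D$ prescribes along the body force $\lambda_N(q_{i,h},0)=-$ for $h\le L-2$ and $\lambda_N(q_{i,L-1},0)=+$. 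A short synchronization/eventual-periodicity argument then shows that $q_{i,0},\dots,q_{i,L-1}$ are pairwise distinct, and that $q_{i,g}=q_{j,g'}$ for some $g,g'\le L-1$ would force $q_{i,L-1}=q_{j,L-1}$; hence the $L$-state chains attached to distinct values of $q_{i,L-1}$ are disjoint, and since $N$ has fewer than $(K+1)L$ states there are at most $K$ of them. Finally, for each $e_{ij}\in E$ with $i<j$, the tail $\tilde e_{ij}$ is positive after $\tilde v_i0^L$ but negative after $\tilde v_j0^L$, which (reading the same string from the same state yields the same output) forces $q_{i,L}\ne q_{j,L}$ and hence $q_{i,L-1}\ne q_{j,L-1}$; so $v_i\mapsto q_{i,L-1}$ is a proper $K$-coloring.

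I expect the \Con{Mealy} converse to be the main obstacle. Because a Mealy machine labels transitions rather than states, the clean ``accepting versus rejecting'' distinction that drives Lemma~\ref{lem:binary} is unavailable, and one must reason with the full output string along a run. The one step that is not completely routine is the bookkeeping that recovers ``each chain has length $L$'' and ``the chains for distinct colours are disjoint'': it relies on the fact that the prescribed output along a body chain is $-$ on every transition except the last, so that equality of two body states would equate a $+$-transition with a $-$-transition. Everything else follows mechanically from Lemma~\ref{lem:binary} and the sample conversions recorded in the preliminaries.
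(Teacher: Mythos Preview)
Your proposal is correct and follows essentially the same route as the paper: hardness for \Con{DFA} and \Con{ADFA} is read off Lemma~\ref{lem:binary} (the forward construction is already acyclic), the $\varepsilon$-label is irrelevant so the same reduction serves for \Con{Moore}, and for \Con{Mealy} one translates accepting states into output labels on incoming transitions. Your write-up is considerably more explicit than the paper's one-sentence treatment of the Mealy case---in particular you spell out the chain argument at the transition-output level (distinctness of $q_{i,0},\dots,q_{i,L-1}$ via the $-\cdots-+$ output pattern, and synchronization forcing $g=g'$ when two chains meet)---but this is exactly the ``counting of strings that take us to different states \ldots\ evidenced by a common suffix that brings different outcomes'' that the paper invokes without detail; your remark on isolated vertices is also a sensible precaution that the paper leaves implicit.
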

\begin{proof}
It does not matter in the above argument whether the empty string is accepted or rejected.
Thus, the reduction given above proves the NP-hardness of \Con{Moore}.
The same reduction proves the NP-hardness of \Con{Mealy}:
 distinct states are still witnessed by a common suffix yielding different outputs, and a Mealy machine can output $+$ on each transition entering a state that is accepting in the DFA.
\end{proof}

\subsection{Inapproximability}
\cite{PittW1993} showed that polynomial-ratio approximation in polynomial time is hard for the optimization version of \Con{DFA} unless $\mrm{P}=\mrm{NP}$.
We show that it is still the case when we restrict instance samples to be prefix-complete.

\cite{Zuckerman2007} showed that the chromatic number of a graph is not polynomial-time approximable within a polynomial ratio unless $\mrm{P}=\mrm{NP}$.
Suppose that there is a polynomial-approximation algorithm for \Con{DFA} with prefix-complete instances.
That is, there exists $c > 0$ such that for any instance sample, it is guaranteed that 
\[
  \frac{\hat{m}}{m_*} \le m_*^c
\]
where $\hat{m}$ is the size of the output of the algorithm and $m_*$ is the size of the smallest consistent DFA.
Applying the approximation algorithm to the reduction instance in Section~\ref{sec:NPcomp_multi}, Lemma~\ref{lem:binary} implies the graph is $\hat{k}$-colorable for $\hat{k} = \lrfloor{\hat{m}/L}$.

Thus, the chromatic number $k_*$ of the graph is approximated by $\hat{k}$ with ratio
\[
  \frac{\hat{k}}{k_*} = \frac{2\hat{k}L}{2k_*L} \le \frac{ 2\hat{k}L }{(k_*+1)L} < \frac{2\hat{m}}{m_*} \le 2 m_*^c
\]
by $\hat{k}L \le \hat{m}$ and $(k_*+1)L > m_*$.
Since $m_*$ is polynomially bounded in the graph size $n$, the approximation ratio is bounded by a polynomial in $n$.
We obtain the inapproximability theorem for \Con{DFA}.
\begin{theorem}\label{thm:inapproximable}
   Unless $\mrm{P} = \mrm{NP}$, no polynomial-time algorithm approximates \Con{DFA} or \Con{ADFA} within any polynomial ratio on prefix-complete instances over binary alphabets.
   The same holds for \Con{Moore}, and \Con{Mealy} on instances with binary input and output alphabets.
\end{theorem}

\section{Learning DFAs from prefix-complete samples generated by single binary strings}\label{sec:single}

This section shows the NP-hardness of \Con{DFA} when an instance sample set consists of all and only prefixes of a single string over binary alphabets.
Our argument also implies that \Con{Moore} and \Con{Mealy} are NP-complete when the instance is just a single pair of strings over binary alphabets.
Considering \Con{ADFA} under this restriction does not make sense, as the smallest ADFA size must be just the length of the input single string plus one.

Our reduction is a modification of the one we used in \Cref{sec:binary}.
The single string we use here is obtained by concatenating all the strings in $S$ in \Cref{sec:binary} preceded by many consecutive $0$'s.
We set this number of $0$'s to be any integer $N$ such that it is polynomially bounded in $|V|$ and $N > (K+1)L$.
Let
\[
    \Str = \prod_{s \in T} s
    \text{ for }
    T = 0^{N} S
    \,.
\]
The positive examples among $\pre(\Str)$ are those corresponding to $S_+$ and those ending with $s 0^i$ for some $s \in T^*$ and $i \in [1,N-1]$, i.e.,
\[
    D_+ = \pre(\Str) \cap T^* (0^{[1,N-1]} \cup 0^{N} S_+)
\]
where $0^{[i,j]} = \{\, 0^k \mid k \in [i,j]\,\}$.
The negative examples are $D_- = \pre(\Str) - D_+$.
The purpose of the long substring $0^{N}$ is to force the automaton to return to the same state each time an element of $T^*0^{N}$ is read.
When reading a prefix $0^j$ of $0^L$ with $j < L$, we traverse accepting states and then reach a rejecting state when we finish reading $0^L$.
If a consistent DFA with fewer than $2N$ states has two or more $0$-chains of length $N$, most of them are identical -- particularly they end in the same rejecting state.
In contrast to the ``short'' $0$-chains of length $L$ for the body strings, the long chain consists mostly of accepting states except the last one.
As a result, no short $0$-chains can be embedded within the long 0-chain.
Figure~\ref{fig:consistentDFA} shows a DFA consistent with the sample set obtained from the graph in Figure~\ref{fig:inputgraph}.
\begin{figure}
    \centering
        \begin{tikzpicture}[
    node distance=1.2cm and 1.0cm,
    >=latex,
    thick,
    state_base/.style={circle, draw, minimum size=6mm, inner sep=0pt},
    start_state/.style={state_base},
    accept_state/.style={state_base, fill=white, double, double distance=1.5pt},
    white_state/.style={state_base, fill=white},
    red_state/.style={state_base, fill=lred},
    green_state/.style={state_base, fill=lgreen},
    blue_state/.style={state_base, fill=lblue},
    edge0/.style={->, draw=rb},
    edge1/.style={->, draw=gb},
    state1/.style={red_state},
    state2/.style={green_state},
    state3/.style={blue_state},
    state4/.style={red_state},
    state5/.style={red_state},   
    edge_label/.style={midway, font=\small, text=black}
]
    \newcommand{\pedge}[2]{
      \draw[edge0,transform canvas={yshift=1pt}] (#1) -- (#2);
      \draw[edge1,transform canvas={yshift=-1pt}] (#1) -- (#2);
    }
    \newcommand{\pedgel}[2]{
      \draw[edge0,transform canvas={yshift=1pt}] (#1) -- node[edge_label, above] {\textcolor{rb}{0}} (#2);
      \draw[edge1,transform canvas={yshift=-1pt}] (#1) -- node[edge_label, below] {\textcolor{gb}{1}} (#2);
    }
    
    \node[white_state] (start) at (0,3.2) {};
    \node[white_state] (u0) at (1.1,3.2) {};
    \node[white_state] (u00) at (2.0,4.0) {};
    \node[white_state] (u01) at (2.0,2.4) {};
    \node[green_state] (v2) at (3,4.8) {};
    \node[red_state] (v1) at (3,3.2) {};
    \node[blue_state] (v3) at (3,1.6) {};
    \pedgel{start}{u0}
    \draw[edge0] (u0) -- (u00);
    \draw[edge1] (u0) -- (u01);
    \draw[edge0] (u00) -- (v1);
    \draw[edge1] (u00) -- (v2);
    \draw[edge0] (u01) -- (v3);
    \draw[edge1] (u01) -- (v1);
  
    \foreach \i in {1,...,3} {
        \node[state\i, right=0.6cm of v\i] (b\i_1) {};
        \node[state\i, right=0.6cm of b\i_1] (b\i_2) {};
        \node[right=0.4cm of b\i_2] (dots\i) {\textcolor{rb}{$\cdots$}};
        \node[state\i, right=0.4cm of dots\i] (b\i_last) {};
        \node[accept_state,state\i, right=0.6cm of b\i_last] (f\i) {};
        \draw[edge0] (v\i) -- (b\i_1);
        \draw[edge0] (b\i_1) -- (b\i_2);
        \draw[edge0] (b\i_2) -- (dots\i);
        \draw[edge0] (dots\i) -- (b\i_last);
        \draw[edge0] (b\i_last) -- (f\i);
    }
    \draw[decorate,decoration={brace,amplitude=6pt}] (3.2,5.2) -- (9.2,5.2) node[midway,above=5pt] {$L$};
    
    \begin{scope}[xshift=9cm]
    \node[accept_state] (a1) at (3,3.7) {};
    \node[white_state] (b1) at (3,2.7) {};
    \node[accept_state] (a2) at (3,4.7) {};
    \node[white_state] (b2) at (3,5.7) {};
    \node[accept_state] (a3) at (3,0.7) {};
    \node[white_state] (b3) at (3,1.7) {};

    \node[white_state] (f1_0) at (1,3.7) {};
    \node[white_state] (f1_00) at (2,3.7) {};
    \node[white_state] (f1_1) at (1,2.7) {};
    \node[white_state] (f1_10) at (2,2.7) {};
    \node[white_state] (f2_0) at (1,5.0) {};
    \node[white_state] (f2_00) at (2,5.7) {};
    \node[white_state] (f2_01) at (2,4.7) {};
    \node[white_state] (f3_0) at (1,1.7) {};
    \node[white_state] (f3_00) at (2,1.7) {};
    \node[white_state] (f3_1) at (1,0.7) {};
    \node[white_state] (f3_10) at (2,0.7) {};
    \end{scope}
    \draw[edge0] (f1) -- (f1_0);
    \draw[edge1] (f1) -- (f1_1);
    \draw[edge0] (f1_0) -- (f1_00);
    \draw[edge1] (f1_0) -- (f1_10);
    \draw[edge0] (f1_1) -- (f1_10);
    \pedge{f1_00}{a1};
    \pedge{f1_10}{b1};
    \draw[edge0] (f2) -- (f2_0);
    \draw[edge0] (f2_0) -- (f2_00);
    \draw[edge1] (f2_0) -- (f2_01);
    \draw[edge0] (f2_00) -- (b2);
    \pedge{f2_01}{a2};
    \draw[edge0] (f3) -- (f3_0);
    \draw[edge1] (f3) -- (f3_1);
    \pedge{f3_0}{f3_00};
    \pedge{f3_00}{b3};
    \draw[edge0] (f3_1) -- (f3_10);
    \pedge{f3_10}{a3};
    
    \node[start_state, initial, initial text=, initial where=right] (r0) at (13,7) {};
    \node[accept_state] (r1) at (12,7) {};
    \node[accept_state] (r2) at (11,7) {};
    \node[accept_state] (r3) at (10,7) {};
    \node[] (r4) at (9,7) {};
    \node[] (r5) at (3,7) {};
    \node[accept_state] (r6) at (2,7) {};
    \node[accept_state] (r7) at (1,7) {};
    \node[accept_state] (r8) at (0,7) {};
    \foreach \i in {1,...,3} {
        \draw[edge0,-] (a\i) -- +(1cm,0);
        \draw[edge0,-] (b\i) -- +(1cm,0);
    }
    \draw[edge0] (a3) -- +(1cm,0) -- node[edge_label, right] {\textcolor{rb}{0}} +(1cm,5.6cm) -- (r1);
    \draw[edge0] (r0) -- node[edge_label, above] {\textcolor{rb}{0}} (r1);
    \draw[edge0] (r1) -- (r2);
    \draw[edge0] (r2) -- (r3);
    \draw[edge0] (r3) -- (r4);
    \draw[edge0,-,dotted] (r4) -- (r5);
    \draw[edge0] (r5) -- (r6);
    \draw[edge0] (r6) -- (r7);
    \draw[edge0] (r7) -- (r8);
    \draw[edge0] (r8) -- (start);
    \node (r3) at (10,7) {};
    \draw[decorate,decoration={brace,amplitude=6pt}]
      ($(r8)+(-3mm,4mm)$) -- ($(r0)+(-3mm,4mm)$)
      node[midway,above=5pt] {$N-1$};
  \end{tikzpicture}
    \caption{DFA consistent with the sample set obtained from the graph in Figure~\ref{fig:inputgraph}.}
    \label{fig:consistentDFA}
\end{figure}
\begin{lemma}\label{lem:single}
   A graph $G$ admits a $K$-coloring if and only if $(D_+,D_-)$ admits a consistent DFA with at most $m=N+(K+1)L$ states.
\end{lemma}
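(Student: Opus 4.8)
The plan is to keep the ``per-block'' gadget of the reduction behind Lemma~\ref{lem:binary} and to splice in one shared \emph{long $0$-chain} of length $N$ through which every $0^{N}$ separator returns to a common state. Recall that $\Str$ is a concatenation of $2|E|$ blocks $0^{N}\tilde v_{i}0^{L}\tilde e_{ij}$ and that $(D_+,D_-)$ makes the prefixes $w0^{j}$ with $w\in T^{*}$ and $1\le j\le N-1$ positive while $w0^{N}$ is negative; thus reading a separator traverses $N-1$ accepting states before hitting a rejecting one, in contrast to a body $0^{L}$, which runs through rejecting states and ends at an accepting one. For the ``if'' direction I would take a $K$-coloring $\Phi$, build the head tree ($<4|V|$ states), the $K$ body chains $q_{k,0},\dots,q_{k,L}$ (only $q_{k,L}$ accepting), and the tail part ($\le 2|E|\lrceil{\log_2|E|}$ states) exactly as in Lemma~\ref{lem:binary}, then prepend a chain $q_0\to s_1\to\cdots\to s_{N-1}\to\hat r$ with $s_1,\dots,s_{N-1}$ accepting, $\hat r$ rejecting and $\delta^{*}(q_0,0^{N})=\hat r$, route every tail-end on input $0$ to $s_1$, and read the head tree from $\hat r$. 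Tracing $\Str$ shows consistency, and the count is below $N+4|V|+KL+2|E|\lrceil{\log_2|E|}<N+(K+1)L=m$ by the bound on $L$.

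For the ``only if'' direction, let $M$ be consistent with at most $m$ states; since $N>(K+1)L$ we have $|Q|\le m<2N$. The first step is the \emph{separator collapse}: for every $w0^{N}\in\pre(\Str)$ with $w\in T^{*}\cup\{\varepsilon\}$ the state $\delta^{*}(q_0,w0^{N})$ is one fixed state $\hat r$. A separator's trajectory $x_0,x_1,\dots,x_N$ has $x_1,\dots,x_{N-1}$ accepting and $x_N$ rejecting; the first $N-1$ are pairwise distinct, for a repeat would make the run cycle among accepting states and never reach the rejecting $x_N$. Hence each separator uses at least $N$ distinct states, so two separators' state sets intersect (as $|Q|<2N$); following $0$'s from an intersection point then forces their rejecting endpoints to agree, for otherwise a rejecting endpoint would equal an interior (accepting) state. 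Consequently every block is effectively read from $\hat r$, so $q_{i,h}:=\delta^{*}(\hat r,\tilde v_i0^{h})$ is well defined; since $\tilde v_i0^{h}$ is rejected for $h<L$ and accepted for $h=L$, the chain $q_{i,0},\dots,q_{i,L}$ has $L+1$ distinct states, and when $q_{i,L}\ne q_{j,L}$ the two chains are disjoint, exactly as in Lemma~\ref{lem:binary}.

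Now suppose the $q_{i,L}$ take at least $K+1$ distinct values; choose $K+1$ pairwise disjoint body chains $C_1,\dots,C_{K+1}$, each of size $L+1$. Let $\mathcal L=\{\delta^{*}(q_0,0^{j}):0\le j\le N\}$, of size at least $N$, whose only rejecting states are $q_0$ and $\hat r$. Each $C_t$ meets $\mathcal L$ in very few states: a rejecting state of $C_t$ lying on $\mathcal L$ must be $q_0$ or $\hat r$, and the accepting top $q_{i,L}$ can only be the last interior state $\delta^{*}(q_0,0^{N-1})$ --- because, arranging the edge encodings to start with $0$, the symbol of $\Str$ right after a body $0^{L}$ is a $0$, which pins $\delta(q_{i,L},0)$ to a rejecting state, while every interior accepting state of $\mathcal L$ except the last has an accepting $0$-successor. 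As the chains are disjoint, $\sum_t|C_t\cap\mathcal L|\le 3$, so $|Q|\ge|\mathcal L|+\sum_t(L+1-|C_t\cap\mathcal L|)\ge N+(K+1)(L+1)-3=m+K-2>m$ for $K\ge3$ (the cases $K\le2$ are immediate, and reducing from $3$-coloring already suffices). Hence the $q_{i,L}$ take at most $K$ values, and setting $v_i,v_j$ the same color iff $q_{i,L}=q_{j,L}$ gives a $K$-coloring that is proper because $\tilde v_i0^{L}\tilde e_{ij}\in S_+$ and $\tilde v_j0^{L}\tilde e_{ij}\in S_-$ for $\{v_i,v_j\}\in E$, $i<j$, force $q_{i,L}\ne q_{j,L}$.

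I expect the main obstacle to be precisely this last counting step: showing that the $K+1$ long \emph{rejecting} body chains cannot be buried inside the single long \emph{accepting} chain. The accepting/rejecting contrast carries most of the weight, but preventing body tops from scattering along the accepting part of $\mathcal L$ forces the design choice that each body $0^{L}$ is immediately followed by a $0$ in $\Str$; it then remains to check that the bookkeeping on $L$ still absorbs the slightly longer tails and that the harmless coincidences at $q_0$, $\hat r$ and the encoding collisions flagged in the footnotes do not upset the count.
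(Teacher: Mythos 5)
Your proposal is correct and follows the same overall route as the paper: the same forward construction (prepend a length-$N$ accepting $0$-chain ending in the old initial state and loop every tail-end back into it on input $0$), the same separator-collapse argument showing every string in $T^*0^N$ must land on one fixed rejecting state, and the same reduction to the disjoint-$0$-chain counting of Lemma~\ref{lem:binary}. Where you genuinely diverge is the final counting step, and your version is the more careful one. The paper settles the interaction between the long chain and the $K+1$ short chains by observing only that the $L$ rejecting states of each short chain are disjoint from the $N-1$ accepting states of the long chain; by itself that yields a lower bound of $(K+1)L+(N-1)=m-1$ states, which does not contradict $|Q|\le m$ unless one also controls where the accepting tops $q_{i,L}$ and the two rejecting anchors $q_0,\hat r$ can sit. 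You isolate exactly this issue, and your fix --- prefixing every edge encoding with $0$ so that $\delta(q_{i,L},0)$ is pinned to a rejecting tail state, forcing each top off the interior of $\mathcal L$ except possibly at $\delta^*(q_0,0^{N-1})$ --- gives the clean bound $|Q|\ge N+(K+1)(L+1)-3>m$ for $K\ge 3$. The cost is a one-symbol-longer tail (absorbed by the freedom in choosing $L$) and restricting to $K\ge 3$, which is harmless since $3$-coloring is already NP-hard. In short: same architecture, but your treatment of the overlap bookkeeping is a genuine tightening of the step the paper passes over with ``no short $0$-chains can be embedded into the long $0$-chain.''
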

\begin{proof}
Suppose that the graph admits a proper $K$-coloring.
We give a consistent DFA $M'$ by modifying the DFA $M$ given in the proof of Lemma~\ref{lem:binary}.
We introduce $N$ new states $q_{0,0},\dots,q_{0,N-1}$ that form a $0$-chain together with the initial state of $M$, where $q_{0,0}$ is the new initial state and the $0$-chain ends in the initial state of $M$.
In addition, we add a $0$-transition from every leaf state $q$ of $M$ to $q_{0,1}$.
Here, the leaf states are those reached by reading elements of $S$ in $M$. 
This is a consistent DFA with $N$ more states than $M$.

Conversely, suppose there is a consistent DFA $M$ with at most $N+(K+1)L$ states.
The presence of the prefix $0^{N}$ where $0^{N}$ is rejected and $0^j$ are accepted for all $j \in [1,N-1]$ implies that $M$ must have a $0$-chain of $q_{0,1},\dots,q_{0,N}$ where $\delta^*(q_0,0^i) = q_{0,i}$ for the initial state $q_0$ for all $i \in [1,N]$, $q_{0,i}$ are accepting for all $i \in [1,N-1]$, and $q_{0,N}$ is rejecting.
Consider the state reached by reading a string of the form $s 0^{N} \in \pre(\Str)$ for some $s \in T^*$.
If the state is not $q_{0,N}$, we must have a 0-chain of length $N$ disjoint from $q_{0,1},\dots,q_{0,N}$, which means that the DFA has more than $2 N > m$ states, a contradiction.
Therefore, each time we read a string in $T^* 0^{N}$, we must come back to the same state.
Hence, the same argument as the proof of Lemma~\ref{lem:binary} applies, and we conclude that the graph $G$ is $K$-colorable.
Note that the rejecting states $q_{k,0},\dots,q_{k,L-1} \notin F$ of the short $0$-chains for $k \in [1,K]$ and the accepting states $q_{0,1},\dots,q_{0,N-1} \in F$ of the long $0$-chain are disjoint, since they carry different accept/reject labels.
Therefore, no short $0$-chains can be embedded into the long $0$-chain to evade the counting argument.
\end{proof}

\begin{theorem}
\label{thm:single}
\Con{DFA} and \Con{ADFA} are NP-complete when the input alphabet is binary and the instance sample consists of all and only the prefixes of a single string.
\Con{Moore} and \Con{Mealy} are NP-complete when the input is a single run over a binary alphabet.
\end{theorem}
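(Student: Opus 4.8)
The plan is to confirm membership in $\mrm{NP}$ for all four problems and then to prove $\mrm{NP}$-hardness of each, using the single-string reduction behind Lemma~\ref{lem:single} for \Con{DFA}, \Con{Moore}, and \Con{Mealy}, and the acyclic reduction behind Lemma~\ref{lem:binary} for \Con{ADFA}. Membership in $\mrm{NP}$ is immediate: a candidate machine with at most $m$ states can be guessed and checked against the sample, or equivalently against the single run, in polynomial time, just as noted for the unrestricted problems in Section~2. For \Con{DFA}, $\mrm{NP}$-hardness is exactly Lemma~\ref{lem:single}. From a \tsc{GraphColoring} instance $(G,K)$ I form the single binary string $\Str$ with its $0^{N}$ padding, take the sample $(D_+,D_-)$ with $D_+ \cup D_- = \pre(\Str)$, and set $m = N + (K+1)L$. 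Since $N$, $L$, and the encodings $\tilde{v}_i,\tilde{e}_{ij}$ are polynomially bounded in $|V|$, both $\Str$ and $m$ are computable in polynomial time, and Lemma~\ref{lem:single} yields that $G$ is $K$-colorable iff $(D_+,D_-)$ admits a consistent DFA with at most $m$ states, the sample being all and only the prefixes of the single binary string $\Str$.

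For \Con{Moore} and \Con{Mealy} I would transport the same instance through the equivalence recorded in Section~2. The sample $(D_+,D_-)$ is almost prefix-complete, so it corresponds to a single run $(\Str,t)$ with $|t| = |\Str|$ over the binary output alphabet, the $i$-th symbol of $t$ being the accept/reject label of the length-$i$ prefix of $\Str$. As in the proof of Theorem~\ref{condfa}, a consistent DFA of size $s$ gives a Moore machine of size $s$ by using each state's accept/reject status as its output, and a Mealy machine of size $s$ by emitting $+$ on exactly those transitions entering a former accepting state; conversely any consistent Moore or Mealy machine induces a consistent DFA on the same states. The only discrepancy is the output assigned to the initial state, which is harmless because the counting argument of Lemma~\ref{lem:single} never refers to the label of $\varepsilon$. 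Hence \Con{Moore} and \Con{Mealy} are $\mrm{NP}$-complete already for a single binary run.

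For \Con{ADFA} I would invoke the acyclic reduction of Section~\ref{sec:binary}, which establishes $\mrm{NP}$-completeness over prefix-complete binary samples: the instance $(S_+,S_-)$ is prefix-complete over $\{0,1\}$, its positive instances admit the acyclic witness of Figure~\ref{fig:consistentADFA}, and Lemma~\ref{lem:binary} equates $K$-colorability with the existence of a consistent automaton of fewer than $(K+1)L$ states that can be chosen acyclic. I expect the step needing the most care to be the size bookkeeping shared by all four models — matching the number of disjoint length-$L$ $0$-chains (together with the length-$N$ chain in the single-string models) to the coloring number $K$ — since it is exactly this chain count that Lemmas~\ref{lem:binary} and~\ref{lem:single} turn into colors. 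Because the two reductions use different vehicles, namely the cyclic long chain forced by the $0^{N}$ padding for \Con{DFA}, \Con{Moore}, and \Con{Mealy}, and the acyclic body chains for \Con{ADFA}, the main obstacle is to state the four claims so that each size bound is the right one for its model while the underlying chain-counting principle is applied uniformly.
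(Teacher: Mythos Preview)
Your treatment of \Con{DFA}, \Con{Moore}, and \Con{Mealy} is correct and matches the paper: hardness is precisely Lemma~\ref{lem:single}, and the transfer to Moore and Mealy machines is the same accept/reject relabeling used in the proof of Theorem~\ref{condfa}, with the label of $\varepsilon$ being irrelevant to the chain-counting argument.

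There is, however, a genuine gap in your \Con{ADFA} argument. You fall back on the reduction of Section~\ref{sec:binary}, but that produces a sample $(S_+,S_-)$ which is the prefix closure of \emph{several} strings $\tilde{v}_i 0^L \tilde{e}_{ij}$, not of a single string; it therefore does not establish the claim as stated in the theorem, namely that the sample consists of all and only the prefixes of one binary string. More importantly, no reduction can establish that claim: as the paper itself remarks in the paragraph preceding Lemma~\ref{lem:single}, an acyclic deterministic automaton reading a string of length $n$ must visit $n+1$ pairwise distinct states, so the smallest consistent ADFA for a single-string prefix sample always has exactly $|\Str|+1$ states and the decision problem is trivially in $\mrm{P}$. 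The appearance of \Con{ADFA} in the statement of Theorem~\ref{thm:single} is thus an oversight in the paper (presumably carried over from Theorem~\ref{condfa}); you should not try to patch it with the multi-string reduction, but simply drop the \Con{ADFA} clause.
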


We note that, unlike Lemma~\ref{lem:binary}, Lemma~\ref{lem:single} cannot be extended to show polynomial-ratio inapproximability.
 In our reduction, any consistent DFA requires at least $N$ states, while the trivial $|V|$-coloring gives a consistent DFA with at most $N+(|V|+1)L$ states.
 Since the reduction for the approximation problem must set $N > (|V|+1)L$ (as no target value $K$ is given in the approximation problem), this trivial consistent DFA has fewer than twice the number of states of a smallest consistent DFA.
Note that even if we could use a tighter known upper bound $\hat{k} \le |V|$ on the chromatic number and set $N = (\hat{k}+1)L + 1$, there would exist a consistent DFA with at most $N+(\hat{k}+1)L$ states, which is again less than twice the size of a smallest consistent DFA.
This approach does not even rule out a factor-2 approximation.
\section{Conclusion and future work}
Expanding upon previous work, we have restricted the types of instances handled in problems that aim to find machines of minimum number of states consistent with those instances. Using a previous reduction from a graph problem, we proved that learning (acyclic) DFAs, as well as Moore and Mealy machines, from prefix-complete instances in a binary alphabet, when the instance strings are all prefixes of a single string, is NP-complete. If the instance is binary and prefix-closed, we have also proved, assuming $\mrm{P} \neq \mrm{NP}$, that the above-mentioned machine learning problems are not polynomially approximable with any polynomial-time algorithm.

The main question left open by our work is whether the inapproximability of Theorem~\ref{thm:inapproximable} survives the single-string restriction of Section~\ref{sec:single}.
Here the two requirements appear to be in tension, at least under our approach:
the long $0^N$ padding that makes the single-string reduction possible necessarily dominates the automaton's size, swamping the coloring-dependent term and capping the achievable approximation ratio at two.
Establishing inapproximability would require a construction in which the coloring-dependent part, rather than a fixed dominating chain, governs the optimum size;
 whether such a construction exists, or whether the single-string case instead admits a good approximation, remains unclear.
 The fact that the prefixes of a single string form a path rather than a tree suggests recasting the problem as a constrained partition of $[0,|\Str|]$, which may be more amenable to approximation.

 More broadly, since our results push the natural restrictions on the input quite far, the complementary task of designing algorithms that solve or approximate the problem efficiently on well-behaved samples seems the more promising long-term direction.

\bibliography{refs}

\end{document}